\documentclass{vgtc}

\graphicspath{{figs/}} 

\usepackage{times}

\usepackage{subcaption}

\usepackage{amsthm}     
\usepackage{amssymb}
\usepackage{amsmath}
\usepackage{graphicx}
\usepackage{enumitem}
\usepackage[dvipsnames]{xcolor}

\newcommand{\change}[2]{{#2}}
\newtheorem{theorem}{Theorem}

\onlineid{1336} 
\vgtccategory{Research}

\vgtcinsertpkg

\title{Topology-Preserving Meshing of Implicit Scalar Fields\\ via Monotonicity
Constraints}

\author{
  Tanner Finken\thanks{e-mail: \{finkent, josh\}@arizona.edu}\\
  \scriptsize{University of Arizona}
  \and Jixian Li\thanks{e-mail: \{jixianli, beiwang\}@sci.utah.edu}\\
  \scriptsize{University of Utah}
  \and Bei Wang\textsuperscript{\textrm{\dag}}\\
  \scriptsize{University of Utah}
  \and Hanqi Guo\thanks{e-mail: guo.2154@osu.edu}\\
  \scriptsize{Ohio State University}
  \and Joshua A. Levine\textsuperscript{\textrm{*}}\\
  \scriptsize{University of Arizona}
}

\abstract{
Topological analysis of scalar fields yields structures such as the Morse--Smale complex (MSC) that summarize salient features across multiple scales. Existing MSC extraction algorithms typically assume an explicit representation of the input field, such as a discretely sampled mesh. However, recent advances in visualization have popularized implicit field representations, for which these assumptions no longer hold. In this work, we address the problem of extracting an MSC from an implicitly defined 2D scalar field.
We present a method for constructing a triangulated piecewise-linear (PL) mesh that \change{provably preserves}{aims to preserve} the critical points of \change{the}{an} underlying implicit scalar field. Our central insight is that \change{enforcing a monotonicity condition on all mesh edges guarantees that}{if all edges are monotonic with respect to the underlying field, then} the resulting PL approximation is topologically consistent with respect to critical points. 
\change{We show how to certify and enforce this condition under minimal assumptions on the input, }{Based on this insight, we introduce a refinement procedure that mitigates monotonicity violations.} \change{r}{Requiring} only pointwise evaluations and modest mesh refinement\change{}{, the approach produces PL meshes that are correct with regards to critical points in our experiments.} 
Finally, we demonstrate that additional targeted refinement improves the geometric fidelity of MSC separatrices.
} 

\keywords{Implicit scalar fields, Morse--Smale complexes, implicit neural representations, mesh refinement}

\begin{document}

\firstsection{Introduction}
\maketitle
Scalar fields are widely used to represent phenomena arising from both observational data and numerical simulations. Formally, a scalar field is a function, typically assumed to be at least continuous, that assigns a scalar value to every point in a domain. In visualization, representing such fields involves a tradeoff between accuracy and storage. Broadly, representations fall into two categories: explicit and implicit. We refer to \emph{explicit} representations as those defined on meshes or grids, where interpolation reconstructs the field from discrete samples. In contrast, \emph{implicit} representations describe the field through functional forms, either globally (e.g., a single analytic expression) or via a sparse set of local primitives such as spline patches. 

Both representations have distinct advantages and limitations. Historically, visualization has favored explicit representations because they map naturally onto standard computational pipelines. For continuous fields, however, the accuracy of an explicit representation depends on sampling density; achieving high fidelity may require very dense sampling, which increases storage costs and can limit scalability. These tradeoffs have motivated renewed interest in implicit representations, including multivariate functional approximation (MFA)~\cite{peterka2018foundations} and neural models such as implicit neural representations (INRs) for volumetric data~\cite{sitzmann2020implicit, lu_compressive_2021}. Despite these advantages, extracting global feature descriptors, particularly topological structures, from implicit representations remains challenging.

Topological data analysis provides a suite of techniques for analyzing fields by computing, visualizing, and summarizing their structure. One particularly useful construct is the Morse--Smale complex (MSC). The MSC partitions the domain into regions within which all points share the same ascent to a maximum and descent to a minimum.
Most methods for extracting MSCs assume a discrete representation of the input~\cite{bremer2004topological,de2015morse,edelsbrunner2001hierarchical,gyulassy2008practical,gyulassy2019shared,maack2023parallel,robins2011theory,shivashankar2011parallel,will2024distributed}, largely because they require explicit knowledge of each point’s neighborhood. Consequently, extracting an MSC from an implicitly defined scalar field presents two primary options.

The first option is to pursue a direct extraction, which is challenging due to the oracular nature of implicit representations. Evaluating the field at a given point requires computing it on the fly, necessitating carefully designed methods for operations such as root-finding~\cite{ma_mfa_cpe}. Nevertheless, under certain assumptions about the implicit form, this approach can be made tractable. For example, in the setting of MFAs, recent work by Ma et al. has demonstrated promising results in recovering critical points~\cite{ma_mfa_cpe} and other topological descriptors such as contours, Jacobi sets, and ridge-valley graphs~\cite{ma_mfa_topology}. In the context of INRs, Feng et al.~\cite{leila_inr_topology} propose training strategies that improve topological fidelity by coupling the model with adaptive extraction of topological features.

Alternatively, one can reconstruct an explicit mesh that approximates the implicit representation and then apply existing MSC extraction methods. This approach must address a key issue of topological consistency, as the meshing process can introduce discrepancies relative to the underlying implicit function. At the same time, an explicit mesh can be valuable beyond topological analysis: many standard visualization tasks, such as cropping, contouring, and rendering, are more naturally performed on meshed data.

In this work, we focus on extracting a topologically consistent explicit representation that \change{guarantees}{seeks to recover} all critical points of the implicit function in the resulting piecewise-linear (PL) mesh, up to a prescribed resolution. 
\change{To mitigate topological inconsistencies between the mesh and the underlying function, we enforce a \textit{monotonicity} constraint on every edge of the constructed mesh.}{Our approach is motivated by the theoretical observation that critical point consistency is achieved when all mesh edges are monotonic with respect to the underlying function. Because certifying this condition is impractical for implicit fields accessible only through pointwise evaluations, we instead adopt a sampling-based strategy to detect and resolve monotonicity violations.}
We further refine the mesh to better capture the geometry of the MSC, particularly its separatrices. 
Our contributions include:
\vspace{-0.5em}
\begin{itemize}[leftmargin=*,noitemsep]
\item  \change{}{We characterize the mesh conditions required to ensure topological consistency of critical points, showing that all mesh edges must be monotonic with respect to the underlying Morse function.}
\item \change{}{Guided by this characterization, we develop a refinement algorithm that requires only pointwise evaluations of the function and its derivative to detect and resolve monotonicity violations. The resulting meshes better satisfy the theoretical conditions and produce more accurate MSC separatrices.}
\end{itemize}

\section{Background}
\label{sec:background}

We briefly review the necessary background and establish notation. Let $f \colon \mathbb{R}^2 \rightarrow \mathbb{R}$ be a 2D scalar field, represented implicitly either by an algebraic expression or an implicit neural representation (INR). Our goal is to construct an explicit representation of $f$ in the form of a triangulated mesh. Each mesh vertex $v_i$ is associated with a position $\mathbf{x}_i \in \mathbb{R}^2$ and the corresponding function value $f(\mathbf{x}_i)$. Using piecewise-linear (PL) interpolation over the mesh, we obtain an approximation $\hat{f}$ of $f$.

Our refinement strategy is designed to ensure that $\hat{f}$ captures the gradient behavior of $f$ along mesh edges. We say that a mesh is \emph{monotonic with respect to $f$} if, for every edge connecting vertices $v_i$ and $v_j$, the restriction of $f$ to the segment $\overline{\mathbf{x}_i \mathbf{x}_j}$ is monotonic.

\subsection{Implicit Neural Representations}

We briefly review the definition of an \change{implicit neural representation (INR)}{INR}. INRs use multi-layer perceptrons to encode functions by learning a mapping from spatial coordinates $\mathbf{x}$ to scalar values $f(\mathbf{x})$. Originally popularized for representing radiance fields in computer vision (e.g., NeRFs~\cite{mildenhall2021nerf}), INRs have since found widespread applications in graphics, vision, and visualization~\cite{molaei2023implicit,xie2022neural,zhang2025neural}.

INRs have emerged as a powerful approach for scalar field visualization following several key developments. Periodic activation functions introduced by Sitzmann et al.~enable INRs to represent high-frequency continuous functions and their derivatives~\cite{sitzmann2020implicit}. In this framework, the network learns a nonlinear mapping from spatial coordinates to scalar values by fitting its parameters to sampled data, resulting in a continuous function that can be queried at arbitrary locations~\cite{dupont2022data}. INRs have also gained traction as a form of lossy compression: Lu et al.~demonstrated that volumetric data can be effectively represented by storing network weights rather than raw samples~\cite{lu_compressive_2021}. Since then, numerous extensions have adapted INRs for volumetric scalar fields across different dimensions and applications~\cite{chen2025explorable,han2022coordnet,han2025dcinr,li2024improving,tang2023ecnr,weiss2022fast,wu2023interactive,wu2024distributed}.

Despite these advantages, extracting global topological structure, such as the Morse--Smale complex, from INRs remains challenging. Because the representation is implicit and accessible only through pointwise queries, recovering large-scale features typically requires dense sampling of the domain, which can be computationally expensive. Consequently, tasks that depend on global context can diminish some of the practical efficiency benefits of INRs.

\subsection{Topology Extraction}

The Morse--Smale complex (MSC) is a fundamental structure for analyzing scalar fields. We briefly review key concepts here and refer the reader to~\cite{edelsbrunner_morse} for a detailed treatment.

The MSC segments a scalar field according to gradient flow behavior. For a 2D scalar field $f$, critical points, where $\nabla f(\mathbf{x}) = 0$, are classified as minima, saddles, or maxima based on local structure. The domain is partitioned by considering gradient flow: the ascending manifold of a maximum consists of all points whose gradient ascent trajectories terminate at that maximum, while descending manifolds are defined analogously via gradient descent toward minima. When ascending and descending manifolds intersect transversely, they partition the domain into regions that define the MSC. In 2D, separatrices---gradient paths connecting saddles to extrema---form the boundaries of these regions, yielding a decomposition aligned with the global flow structure.

Bremer et al.~extend this construction to PL functions~\cite{bremer2004topological}, where MSC extraction reduces to identifying critical points and tracing separatrices. However, in the PL setting, gradients are piecewise constant, and separatrix tracing introduces additional subtleties~\cite{edelsbrunner2001hierarchical}. In this work, we adopt a strategy that traces separatrices along edges or through triangle interiors as needed, following the steepest gradient.

Although our primary focus is on critical points and the MSC, our treatment of monotonic edges also relies on reasoning about isocontours. An \emph{isocontour} at value $\rho$ is the set of points $P$ such that $\forall \mathbf{x} \in P,\ f(\mathbf{x}) = \rho$. As $\rho$ varies, connected components of isocontours form, merge, split, and disappear, with these topological changes occurring precisely at critical points. Importantly, any monotonic edge intersects a given isocontour at most once.
\section{Naive Approach}

A straightforward approach to obtaining an explicit representation is to densely sample the continuous implicit function. As the sampling density increases, the PL  approximation $\hat{f}$ converges geometrically to $f$. However, this geometric convergence does not guarantee topological consistency.

To understand why, consider the stability result of Cohen-Steiner et al.~\cite{stability_diagrams}, which bounds topological differences in terms of function distance:
\[
d_B(D(f), D(\hat{f})) \leq \lVert f - \hat{f} \rVert_\infty.
\]
Here, $D(f)$ denotes the persistence diagram of $f$, and $d_B$ is the \emph{bottleneck distance}. This result implies that as two functions become closer in the $\ell_\infty$ sense, their topological summaries also converge.

However, the bottleneck distance permits low-persistence features to be ignored, effectively matching them to the diagonal of the persistence diagram. Consequently, while the overall topology of $f$ and $\hat{f}$ may be close, they need not share the same number of critical points. In particular, spurious critical points may appear in $\hat{f}$, provided their persistence is smaller than the $\ell_\infty$ bound between the two functions. 

To illustrate this issue, we consider the function
\begin{equation} \label{eq:easy_sinc}
    f(x,y) = x \cdot \sin\!\left(x^2 - \frac{80}{x}\right) \cdot e^{-y^2},
\end{equation}
defined over the domain $[-5,-2]\times[-1.5,1.5]$.  
~\Cref{fig:sinc_poisson} demonstrates how naive sampling of \cref{eq:easy_sinc} can introduce spurious topology compared to the ground truth shown in \cref{fig:sinc_ground_truth}. In particular, these spurious critical points arise from misalignment between mesh elements and the underlying ridges, creating small pockets in which the PL approximation exhibits artificial fluctuations (\cref{fig:sinc_zoom}). Increasing the sampling density reduces the magnitude of these artifacts but does not eliminate them; in practice, it may even introduce additional ones. Achieving the correct topology instead requires careful placement of vertices and mesh elements, as illustrated in \cref{fig:sinc_ours}.

\begin{figure}
\centering
\begin{subfigure}[t]{0.2\linewidth}
    \includegraphics[trim=46 30 56 34,clip,height=0.9\textwidth]{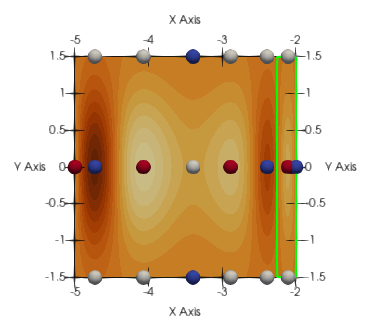}
    \vspace{-0.5em}
    \caption{\label{fig:sinc_ground_truth}}
\end{subfigure}
\hfill
\begin{subfigure}[t]{0.2\linewidth}
    \includegraphics[trim=50 31 53 32,clip,height=0.9\textwidth]{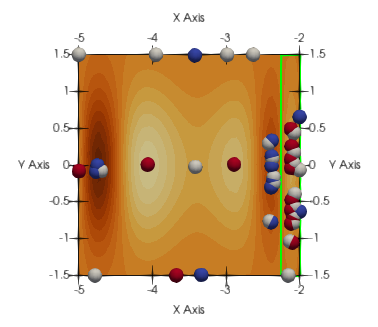}
    \vspace{-0.5em}
    \caption{\label{fig:sinc_poisson}}
\end{subfigure}
\hfill
\begin{subfigure}[t]{0.2\linewidth}
    \includegraphics[trim=46 31 55 34,clip,height=0.9\textwidth]{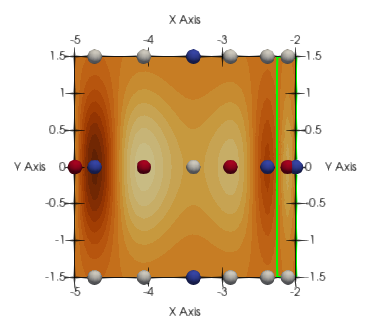}
    \vspace{-0.5em}
    \caption{\label{fig:sinc_ours}}
\end{subfigure}
\hfill
\begin{subfigure}[t]{0.09\linewidth}
    \includegraphics[width=\textwidth]{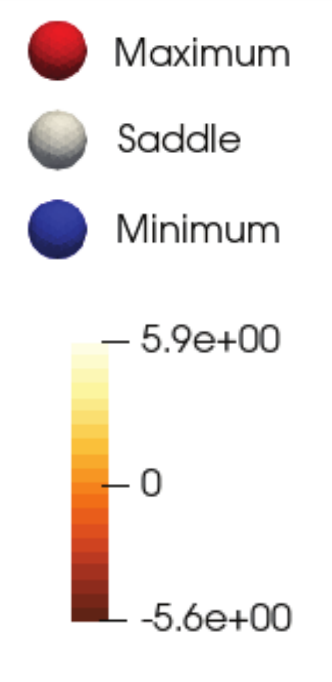}
\end{subfigure}
\hfill
\begin{subfigure}[t]{0.27\linewidth}
    \includegraphics[trim=0 0 0 60,clip,width=\textwidth]{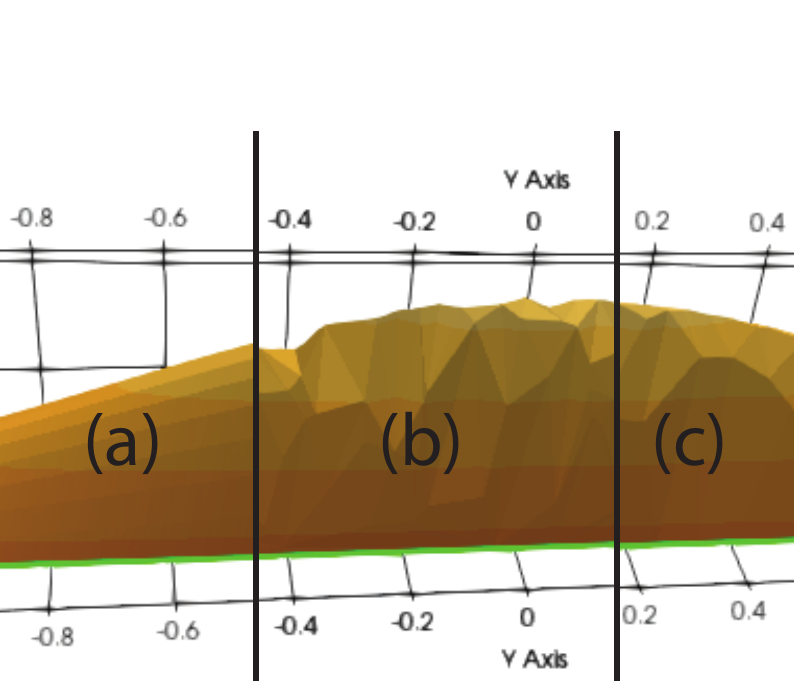} 
    \vspace{-1.5em}
    \caption{\label{fig:sinc_zoom}}
\end{subfigure}
    \vspace{-0.5em}
\caption{Meshing the implicit function defined in \cref{eq:easy_sinc}. (a) Ground truth. (b) Dense Poisson disk sampling without refinement. (c) Our method applied to Poisson disk sampling ($R=0.05$, $w=0.01$). (d) Close-up of the highlighted region, with $f(\mathbf{x})$ shown as elevation, illustrating mesh alignment artifacts in (b) and their mitigation in (c).} 
\label{fig:sinc_dataset}
\vspace{-3ex}
\end{figure}

\subsection{Monotonic Guarantees} 

We first establish a key topological result showing that PL functions defined on meshes with monotonic edges have critical points that are consistent with those of the underlying function, up to the sampling resolution. In particular, topological inconsistencies arise only when triangles are sufficiently large to contain more than one critical point.

\begin{theorem}
\label{theorem:monotonic}
Let $\hat{f}$ be a PL function defined on a mesh that is monotonic with respect to a Morse function $f$. Then:
\begin{enumerate}[leftmargin=*,noitemsep]
    \item Every critical point of $\hat{f}$ coincides with a critical point of $f$, and
    \item Any critical point of $f$ either coincides with a critical point of $\hat{f}$ or shares a triangle with at least one other critical point of $f$.
\end{enumerate}
\end{theorem}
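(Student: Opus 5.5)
We compare the PL critical points of $\hat f$, which live at mesh vertices and are detected from the link, with the smooth critical points of $f$. ``Coincides'' must be read up to the mesh resolution: a PL critical vertex $v$ corresponds to a critical point of $f$ lying in the star of $v$ (a triangle incident to $v$). The main tool is the fact noted in Section~\ref{sec:background}: along a monotonic edge, the sign of $f(\mathbf{x}_j)-f(\mathbf{x}_i)$ equals the sign of the directional derivative of $f$ at every point of the segment. Each edge therefore records the true ascending/descending direction of $f$, and it meets any isocontour of $f$ at most once.

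\textbf{Part 1.} Suppose $v$ is a PL critical vertex of $\hat f$. If it is a minimum (or maximum), every link vertex is higher (lower) than $v$. If it is a saddle, the lower link has at least two components.

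\emph{Extremum case.} Suppose no critical point of $f$ lies in the closed star of $v$. Then $\nabla f\neq 0$ there, and the flow of $-\nabla f$ starting at $\mathbf{x}_v$ decreases $f$ strictly. By monotonicity every edge from $v$ increases away from $v$, so along each edge $f>f(\mathbf{x}_v)$. The sublevel set $\{f<f(\mathbf{x}_v)\}$ near $v$ is nonempty because $\nabla f(\mathbf{x}_v)\neq0$. Its component must therefore either escape the star through a triangle's outer edge, or be contained in the open star minus the spokes. In the second case it sits inside a single triangle as a region bounded by $f=f(\mathbf{x}_v)$. A compact region on which $f<f(\mathbf{x}_v)$ with boundary value $f(\mathbf{x}_v)$ has an interior minimum, which is a critical point, a contradiction. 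In the first case, $f$ on the outer link edge would have to dip below $f(\mathbf{x}_v)$ even though both endpoints of that edge are above it; this contradicts the monotonicity of that link edge.

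\emph{Saddle case.} The lower link has at least two components, so the sublevel set $\{f<f(\mathbf{x}_v)\}$ restricted to the boundary of the star has at least two components. In each triangle, the isocontour $f=f(\mathbf{x}_v)$ meets each edge at most once. Counting crossings shows that the level curve through $\mathbf{x}_v$ has at least four arcs emanating into the star. For a regular point the level set is locally a single smooth arc (two branches), so the extra branches force a change of topology of the level set inside the star, hence a critical point of $f$ there. I would formalize this with the Morse lemma and a degree/index argument: the winding number of $\nabla f$ around the link boundary equals the PL index of $v$, which is $-1\neq 0$. This requires that $\nabla f$ along the boundary not oppose the edge orientations, and monotonicity of the link edges provides exactly that.

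\textbf{Part 2.} This is the converse, and the index argument again drives it. Let $p$ be a critical point of $f$, and let $T$ be the triangle containing it. Assume $p$ is the only critical point of $f$ in $T$ and in the adjacent triangles, since otherwise the second alternative of the statement holds. The sum of PL indices of the vertices of $\hat f$ near $T$ should equal the Poincaré--Hopf index of $f$ on a region containing $T$. We compute both via the winding of the edge-derived gradient directions, which agree with those of $\nabla f$ by monotonicity. The index of $p$ is $\pm1\neq 0$, so some vertex of $T$ (or of its star) is PL critical. Part 1 then pairs that vertex with a critical point of $f$, which must be $p$.

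I expect the main obstacle to be making rigorous the claim that PL index equals the smooth index on a star using only edge monotonicity. Monotonicity constrains $f$ on edges but not in the triangle interiors, so a homotopy between the PL gradient field and $\nabla f$ along the star boundary must be built from edge data alone. I would first try a linear interpolation of the two vector fields and check that it never vanishes on the boundary. Its tangential component along each edge has a fixed sign by monotonicity, which rules out a zero except at vertices, where a separate local argument is needed.
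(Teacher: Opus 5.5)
Your proposal has genuine gaps in both parts.

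\textbf{Part~1 proves a weaker claim than the theorem.} You read ``coincides'' as ``some critical point of $f$ lies in the star of $v$.'' Both of your arguments (the trapped sublevel component with an interior minimum, and the winding number around the link) only locate a critical point somewhere in the star. Nothing forces it to be at $\mathbf{x}_v$. The paper's proof shows that $\mathbf{x}_v$ itself is a critical point of $f$. If $\mathbf{x}_v$ were regular, the level curve through it would be a smooth arc that, by monotonicity, cannot meet any edge of the star except at $\mathbf{x}_v$. It would therefore have to form a corner at $\mathbf{x}_v$; for a PL saddle, the four branches would have to meet there.

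The missing idea is a local argument at the vertex, and your own ``main tool'' supplies it directly. Monotonicity of each edge $vu$ gives $\nabla f(\mathbf{x}_v)\cdot(\mathbf{x}_u-\mathbf{x}_v)$ the weak sign of $f(\mathbf{x}_u)-f(\mathbf{x}_v)$. Around an interior vertex, consecutive edge directions are less than $\pi$ apart, so the edge directions lie in no closed half-plane. At a PL extremum this makes $\nabla f(\mathbf{x}_v)\neq 0$ impossible. At a PL saddle, a nonzero $\nabla f(\mathbf{x}_v)$ would split the cyclically ordered link into at most two sign groups instead of at least four.

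\textbf{Part~2 has a logical gap as well.} Your reduction ``assume $p$ is the only critical point in $T$ and in the adjacent triangles, since otherwise the second alternative holds'' is not valid. A critical point in a neighboring triangle does not share a triangle with $p$, so that case is never handled. The rest of the argument depends on the unproven identity between PL and smooth indices. It also yields only the weakened conclusion that $p$ lies in the star of a PL critical vertex.

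The paper instead argues entirely inside $T$. If $p$ is the only critical point there, the expanding contours around an extremum, or the four level branches of a saddle, must reach $\partial T$. With only three edges, some edge then meets one level set twice.

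If you want a degree argument, run it on $\partial T$ alone, where it is easy. On each edge $e$ with unit tangent $t_e$, $\nabla f\cdot t_e$ has the weak sign of the endpoint difference. For distinct vertex values, the PL gradient $g_T$ of $T$ has that sign strictly on all three edges. Hence $(1-s)\nabla f+s\,g_T$ never vanishes on $\partial T$, assuming no critical point lies on $\partial T$ (otherwise the second alternative holds). The winding number of $\nabla f$ around $\partial T$ is therefore $0$. So a lone critical point of index $\pm1$ inside $T$ is impossible, whatever the neighboring triangles contain.

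Your star-level homotopy can be repaired the same way, with one correction. At the link vertices it needs monotonicity of the spokes, not just of the link edges, contrary to your claim. The PL gradient jumps at each link vertex $u_i$, and only the spoke condition places $\nabla f(\mathbf{x}_{u_i})$ on the same side as both adjacent PL gradients.
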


\begin{proof} 
Our proof proceeds by case analysis over critical point types. We first consider critical points of $\hat{f}$ (Part~1), followed by critical points of $f$ (Part~2). We use standard terminology for the neighborhood of a vertex: the \emph{star} of a vertex consists of the vertex together with all incident edges and triangles, while the \emph{link} consists of the vertices and edges adjacent to the vertex but not including it, forming the one-ring around it.

\begin{figure}
\centering
\begin{subfigure}[t]{0.23\linewidth}
    \includegraphics[trim=30 40 50 40,clip,width=\textwidth]{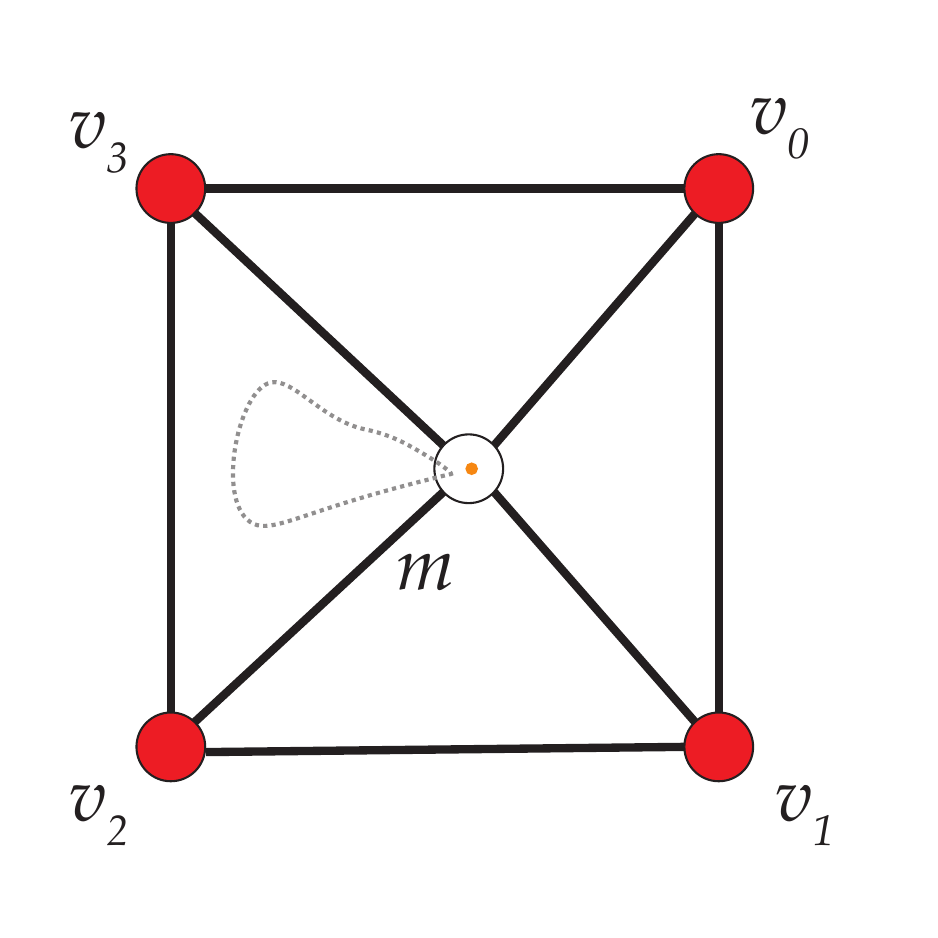}
    \vspace{-1.5em}
    \caption{\label{fig:vertex_min_example}}
\end{subfigure}
\hfill
\begin{subfigure}[t]{0.23\linewidth}
    \includegraphics[trim=30 40 50 40,clip,width=\textwidth]{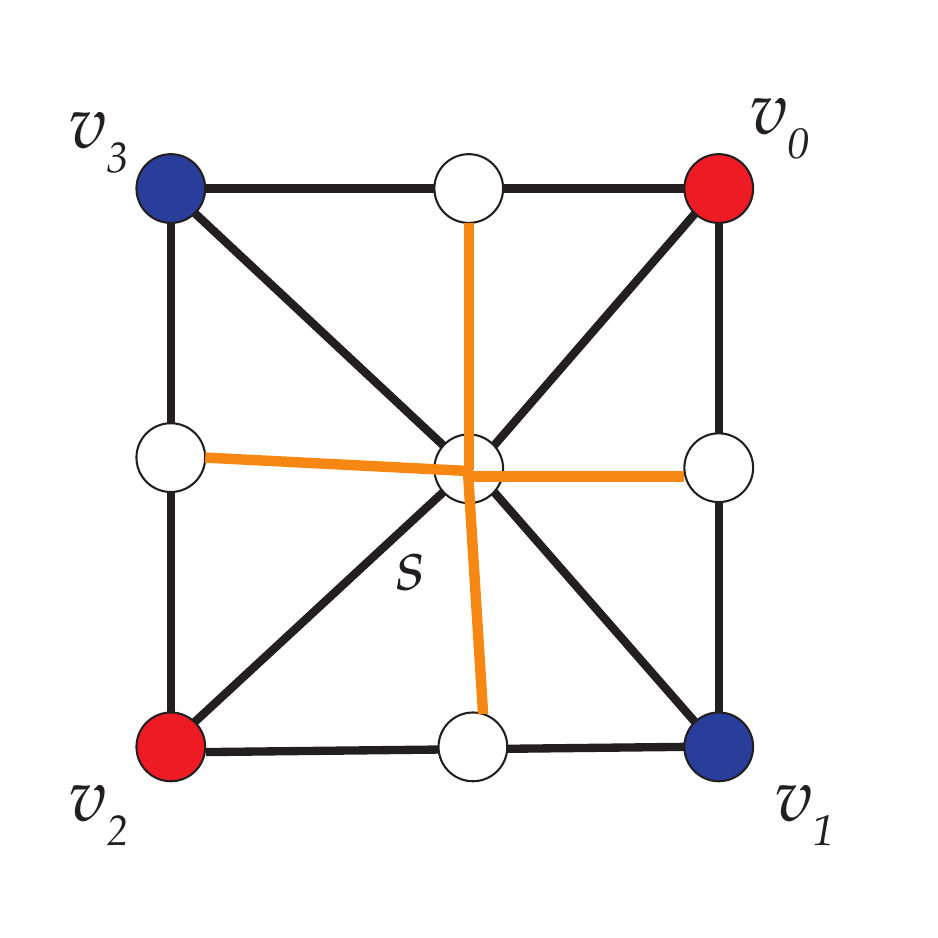}
    \vspace{-1.5em}
    \caption{\label{fig:vertex_saddle_example}}
\end{subfigure}
\hfill
\begin{subfigure}[t]{0.23\linewidth}
    \includegraphics[trim=20 20 20 20,clip,width=\textwidth]{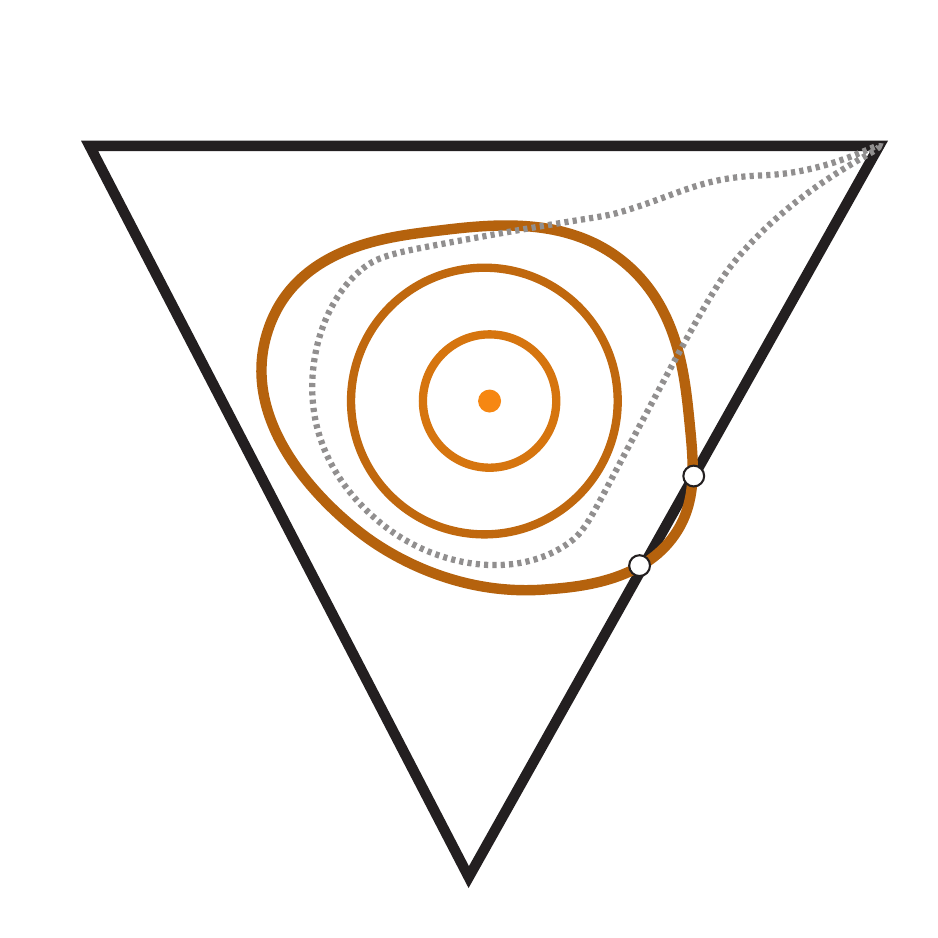}
    \vspace{-1.5em}
    \caption{\label{fig:extrema_contour}}
\end{subfigure}
\hfill
\begin{subfigure}[t]{0.23\linewidth}
    \includegraphics[trim=20 20 20 20,clip,width=\textwidth]{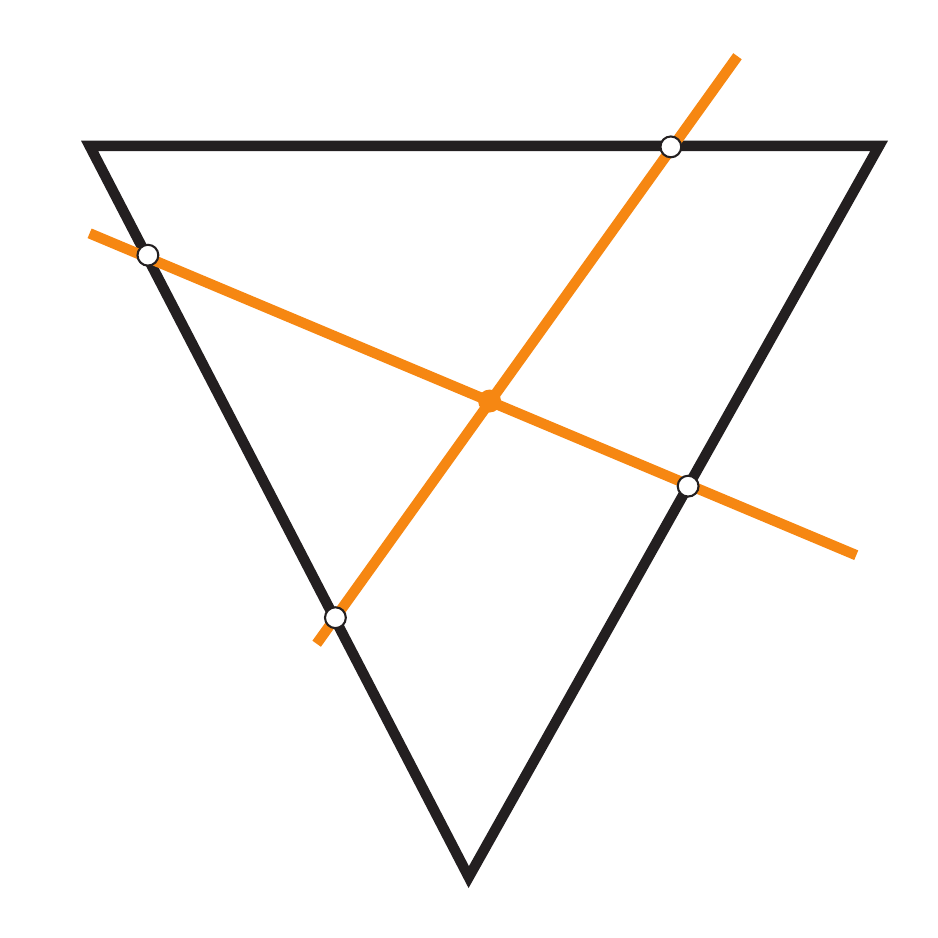}
    \vspace{-1.5em}
    \caption{\label{fig:saddle_contour}}
\end{subfigure}
\vspace{-2mm}
\caption{Cases illustrating ~\cref{theorem:monotonic}. (a) An extremum in $\hat{f}$: the corresponding isocontour in $f$ cannot follow the dotted configuration. (b) A saddle in $\hat{f}$: the isocontour in $f$ must intersect at the vertex. (c) Near an isolated extremum of $f$, isocontours must expand to cross an edge twice; the dotted configuration is not feasible. (d) For an isolated saddle in $f$, the isocontour must intersect an edge twice.}
\label{fig:monotonic_examples}
\vspace{-4mm}
\end{figure}

For Part~1, we consider critical points of $\hat{f}$, which, under PL interpolation, can only occur at vertices since $\nabla \hat{f}$ is nonzero elsewhere. We distinguish between extrema and saddles.

For an extremum, consider a minimum at $\mathbf{x}_m$ (the argument for maxima is symmetric). Let $\rho_m = f(\mathbf{x}_m)$ and consider the isocontour component $\sigma = f^{-1}(\rho_m)$ passing through $\mathbf{x}_m$. By monotonicity, all adjacent vertices have values strictly above $\rho_m$, and $f^{-1}(\rho_m)$ cannot intersect any edge in the star or link of $\mathbf{x}_m$. Thus, $\sigma$ must be contained entirely within a single triangle incident to $\mathbf{x}_m$. Since $f$ is Morse, $\sigma$ is continuous and differentiable. However, remaining within a single triangle while passing through $\mathbf{x}_m$ forces $\sigma$ to form a sharp corner at $\mathbf{x}_m$. The adjacent edges (\cref{fig:vertex_min_example}) constrain the tangent directions of $\sigma$, and unless they are colinear, this contradicts differentiability.

For a saddle at $\mathbf{x}_s$, let $\rho_s = f(\mathbf{x}_s)$ and consider the isocontours $f^{-1}(\rho_s)$. By monotonicity, these contours cannot intersect edges in the star of $\mathbf{x}_s$, but adjacent vertices may lie above or below $\rho_s$. Since $\mathbf{x}_s$ is a PL saddle, the neighboring vertices partition into two groups above $\rho_s$ and two below (\cref{fig:vertex_saddle_example}). Each edge connecting vertices from different groups contains exactly one point at which $f=\rho_s$, by monotonicity. Consequently, the contours must lie within the four incident triangles and form four branches. Because $f$ is Morse, there are exactly four such branches, and they must meet at $\mathbf{x}_s$; otherwise, they would have to cross an edge in the star, violating monotonicity.

For Part 2, it suffices to show that an isolated critical point cannot exist within a triangle-that is, no additional critical points are contained in its interior. \change{First}{Overall}, we consider an isolated critical point $\mathbf{x}$ of $f$ contained strictly within a triangle. For a minimum (the maximum case is symmetric), isocontours at values slightly above $f(\mathbf{x})$ form closed loops within the triangle, as no other critical points are present. As these contours expand, they must eventually intersect an edge. Continuing to increase the value forces multiple intersections with the same edge, violating monotonicity (\cref{fig:extrema_contour}). 

For a saddle of $f$, the four contour branches expand outward from $\mathbf{x}$. In the absence of additional critical points within the triangle, these branches must eventually intersect the triangle boundary. Since a triangle has only three edges, at least two branches must intersect the same edge, again violating monotonicity (\cref{fig:saddle_contour}).
\end{proof}

\vspace{-0.5em}
\section{Method}

We have just shown that with monotonic edges in a PL mesh, \change{it can be certified that all the}{no} critical points in the PL mesh are \change{not}{}spurious.  
\change{To achieve a mesh with monotonic edges}{To remove non-monotonic edges}, we employ a Delaunay refinement method.  
These methods construct a mesh by initializing a sample of points, and then iteratively inserting points until conditions are satisfied.  
At each insertion, the mesh is updated so that it maintains the Delaunay criteria.  
Delaunay refinement has a rich history that extends back to work from Paul Chew~\cite{chew1989guaranteed}.  
Cheng et al.'s book~\cite{cheng2013delaunay} provides a much more complete reference to many of the developments since.  
\change{In our work, w}{W}e utilize CGAL's implementation of the Delaunay triangulation~\cite{cgal:r-ctm2-26a}. 

Enforcing monotonicity in practice may require precise placement of points.
Instead, our refinement approach solves for approximate positions, with two separate applications of Newton's method.
We use four stages of refinement, outlined in \cref{fig:pipeline}.


\begin{figure} 
    \centering
    \includegraphics[trim=0 0 0 0,clip,width=0.97\linewidth]{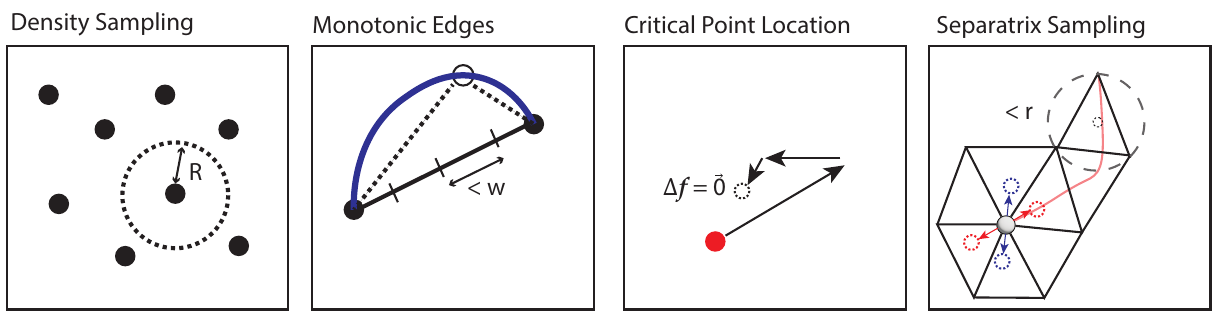}
    \vspace{-2mm}
    \caption{Pipeline overview. Our approach proceeds in four stages: (1) sample points to achieve the desired density, ensuring all triangles are sufficiently small; (2) insert additional vertices to resolve non-monotonic edges; (3) update the positions of PL critical points; and (4) refine the mesh by inserting vertices so that PL separatrices better align with their implicit counterparts.}
    \label{fig:pipeline}
    \vspace{-4mm}
\end{figure}



\textbf{Density Sampling.\hspace{0.5em}}
\change{}{Density sampling aims to have critical points isolated using appropriately sized triangles.} To \change{sample at}{identify} the desired density, we introduce a parameter $R$ to control how densely sampled the domain should be. Our algorithm works by either using a regular grid sampling or a Poisson disk sampling~\cite{bridson2007fast} to ensure 
that triangles do not contain pairs of critical points within a distance of \change{$R$}{$2R$}. 
For Poisson disk sampling, there is slightly more computational cost, but it can produce more variety in edge directions, leading to more directions to follow for separatrix tracing~\cite{le2024revisiting}. 
Note that any sampling strategy can work as long as it ensures \change{an isolation condition}{small enough triangles (i.e., circumradius $<R$)}. 
\textbf{Monotonic Edges.\hspace{0.5em}}
\change{}{Then, by~\cref{theorem:monotonic}, we know that having only monotonic edges in the PL mesh both removes spurious critical points and any missing critical points must be at least paired within a triangle. Therefore, we want to adjust the mesh to form monotonic edges so spurious critical points are removed and isolated critical points are identified.}
%
%
%
\change{W}{To do this, w}e look at each edge in the current mesh, and based on a length scale $w$, we sample $\nabla f$ to detect non-monoticity.
For an edge $e$, we sample uniformly $\max(2,\lceil \frac{||e||}{w}\rceil+1)$ times to at least sample the endpoints of the edge and guarantee a sample rate $<w$. 
The edge is \change{determined}{detected} to be non-monotonic when any two sample gradients, projected in the direction of the edge, disagree in orientation.
We then use a 1D version of Newton's method to approximate the location where orientation of $\nabla f$ switches, and insert a new vertex to split the edge into monotonic sections.
Insertion will inevitably create new edges, and so we repeat this process on each newly created edge.


To identify points to insert, we restrict $f$ to a 1D domain parameterized by $t \in [0,1]$ for which $x(t), y(t)$ interpolate between the two endpoints of the edge.  
Let $g(t) = f(x(t), y(t))$ be this restriction.
Applying the chain rule, the \change{first and second}{}derivatives of $g$ are:
$g'(t) = f_x \frac{dx}{dt} + f_y\frac{dy}{dt},
g''(t) = f_{xx}\left(\frac{dx}{dt}\right)^2 + 2f_{xy}\frac{dx}{dt}\frac{dy}{dt} + f_{yy}\left(\frac{dy}{dt}\right)^2. $
\vspace{0.25em}

Newton's method is then applied in this 1D setting to refine the parameter $t$ \change{until we}{aiming to} identify the point for which $g'(t) = 0$\change{:}{.}
The insertion of the points where $g'(t)=0$ will typically correspond to changes in gradient behavior. 
\change{We skip insertions when we do not find both a positive and negative value while sampling }{When non-monotonicity is not detected, the gradient change occurs at a length smaller than $w$.}

\textbf{Critical Point Improvement.\hspace{0.5em}}
\change{While~\cref{theorem:monotonic} ensures a correspondence between critical points, d}{D}ue to edges being only approximately monotonic, the exact locations of critical points may slightly differ between $f$ and $\hat{f}$.
We again apply Newton's method, this time to locate critical points by solving for $\nabla f(\mathbf{x}) = 0$, yielding a multivariate update
\[
\mathbf{x}_{n+1} = \mathbf{x}_n - H(\mathbf{x}_n)^{-1} \nabla f(\mathbf{x}_n),
\]
where $H(\mathbf{x}_n)$ is the Hessian matrix. This technique is applied to all critical points of $\hat{f}$ to improve their positions.

\textbf{Separatrix Refinement.\hspace{0.5em}}
When computing the separatrix lines of $\hat{f}$, we follow the edge or triangle of steepest ascent/descent through the mesh~\cite{bremer2004topological}. 
However, these directions only approximate the integral lines of $f$ emanating from the saddles.  
We can improve the separatrices of $\hat{f}$ by tracing and adding more samples around these paths initialized around saddles.
Each saddle has two ascending directions and two descending directions corresponding to the eigenvectors of the Hessian matrix at that point.
We begin by inserting points in the eigenvector directions as shown in the 4th step of ~\cref{fig:pipeline} and then tracing steepest paths.
We use a parameter $r$ to adaptively adjust mesh density in regions near PL separatrices. 
At each step on the path, all adjacent triangles are refined to have a circumradius $<r$ by iteratively inserting the circumcenters. Additionally, when following a steepest gradient through a triangle, the point on the opposite edge is inserted and followed.
The additional points allow $\hat{f}$'s steepest directions to more closely align with $f$'s integral lines in these regions. 
\section{Experiments}

The algorithm was implemented using Python and results visualized using ParaView. The code \change{will be made}{is} publicly available on GitHub \change{upon acceptance}{at \href{https://github.com/finkentA/INR_MSC_Code}{https://github.com/finkentA/INR\_MSC\_Code}}.

\subsection{Synthetic Function Test} 
To show our method's ability to capture critical points and improve separatrix geometry at higher resolutions (denser sampling), we consider a synthetic function formed by summing together a Griewank function and Gaussian over the range $[-6,6]\times [-6,6]$:
\begin{equation} \label{eq:grewank}
    f(x,y) = \frac{x^2 + y^2}{4000} - \cos(x) \cdot \cos\left(\frac{y}{\sqrt{2}}\right) + 1 + \frac{5}{2\pi} \cdot e^{-\frac{(x-1.5)^2+y^2}{0.02}}
\end{equation} 
Since the Gaussian's variance is relatively low, its critical point pair around $(1.5,0)$ may not be detected at lower resolutions. We ran our method at 3 different resolutions by varying parameters as described in~\cref{fig:gwank_dataset}. At the lowest resolution (\cref{fig:gwank_lowres}) the Gaussian critical point is not detected, but it is captured at higher resolutions.  Additionally, we utilized the separatrix refinement sampling to improve the separatrices in~\cref{fig:gwank_medres} to be similar to the higher resolution ones in~\cref{fig:gwank_highres} showing a comparable topology, albeit with 4567 fewer vertices (-42\%) in the resulting mesh. 

\begin{figure} 
\centering
\begin{subfigure}[t]{0.30\linewidth}
    \center
    \includegraphics[trim=0 60 110 0,clip,height=60pt]{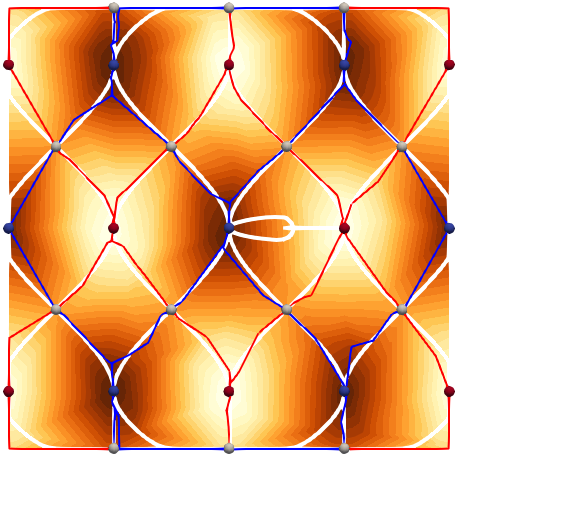} 
    \caption{ Low Resolution }
    \label{fig:gwank_lowres}
\end{subfigure}
\hfill
\begin{subfigure}[t]{0.30\linewidth}
\center
    \includegraphics[trim=0 60 110 0,clip,height=60pt]{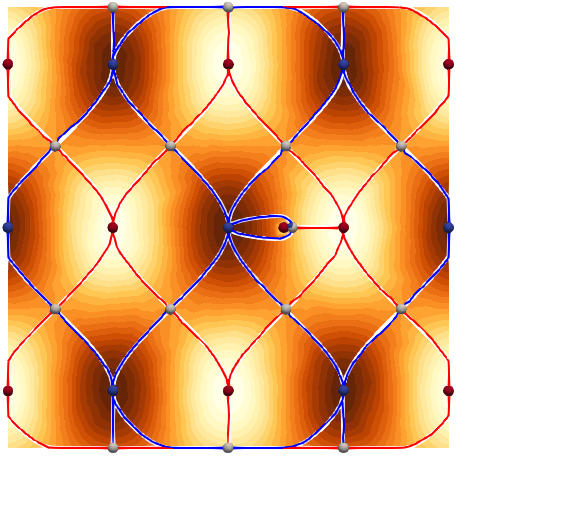} 
    \caption{ Medium, refined }
    \label{fig:gwank_medres}
\end{subfigure}
\hfill
\begin{subfigure}[t]{0.30\linewidth}
\center
    \includegraphics[trim=0 60 111.5 0,clip,height=60pt]{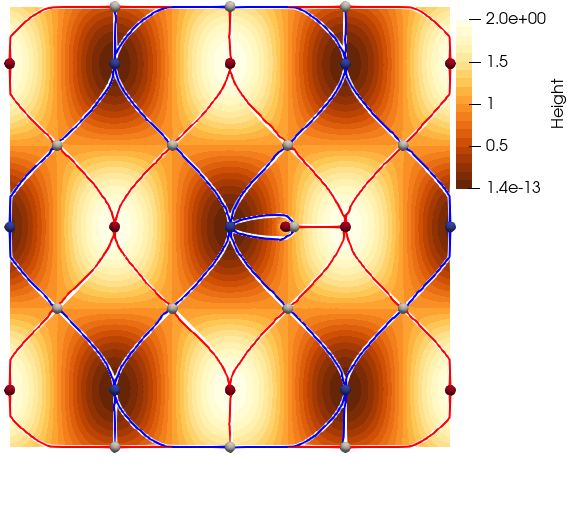} 
    \caption{ High Resolution }
    \label{fig:gwank_highres}
\end{subfigure}
\begin{subfigure}[t]{0.07\linewidth}
    \includegraphics[trim=220 60 227 0,clip,height=60pt]{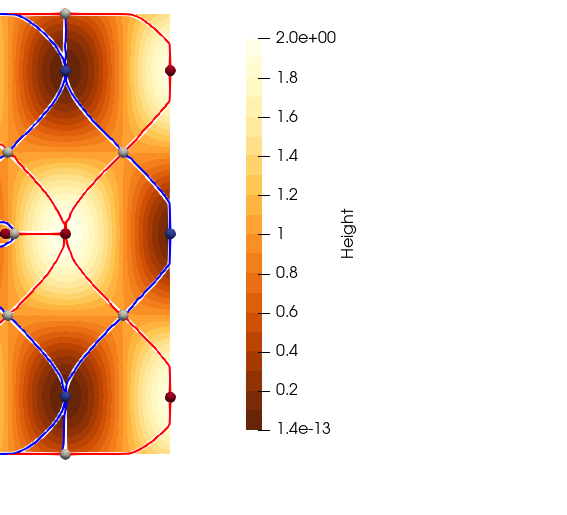} 
\end{subfigure}
\vspace{-2mm}
\caption{Griewank and Gaussian dataset (\cref{eq:grewank}) sampled via Poisson disk sampling ($w=0.1$) at varying resolutions: (a) $R=2.0$ (502 points), (b) $R=1.0$ with separatrix refinement $r=0.15$ (6,256 points), and (c) $R=0.1$ (10,823 points). Extracted PL separatrices are shown in red (ascending) and blue (descending), with the \change{}{assumed} ground truth overlaid in white.} 
\label{fig:gwank_dataset}
\end{figure}



\subsection{INR Test Case}

To demonstrate our goal of extracting topology from implicit neural representations (INRs), we apply our method to a previously trained INR on terrain data from~\cite{leila_inr_topology}.
For comparison, we first consider a regular grid derived from the INR (of size $500 \times 500$ with diagonals from top left to bottom right). We then use a small amount of persistence-based simplification ($0.025\%$) to eliminate any spurious critical points caused by meshing and not having monotonic edges. The resulting critical points are interpreted as the salient features that characterize the underlying topology of the INR. 
Notably, the INR may also encode additional topological structures that are either not captured by the uniform discretization or are removed during simplification due to very low persistence.
\Cref{fig:implicit_ours} presents the results of applying our method to this INR and simplified by the same amount, $0.025\%$, with the reference critical points marked as background squares. Our approach successfully matches all reference critical points with a much smaller mesh (6,359 vertices). 

\begin{figure}
\vspace{-2mm}
\centering
\begin{subfigure}[t]{0.8\linewidth}
    \includegraphics[trim=0 0 120 0,clip,height=75pt]{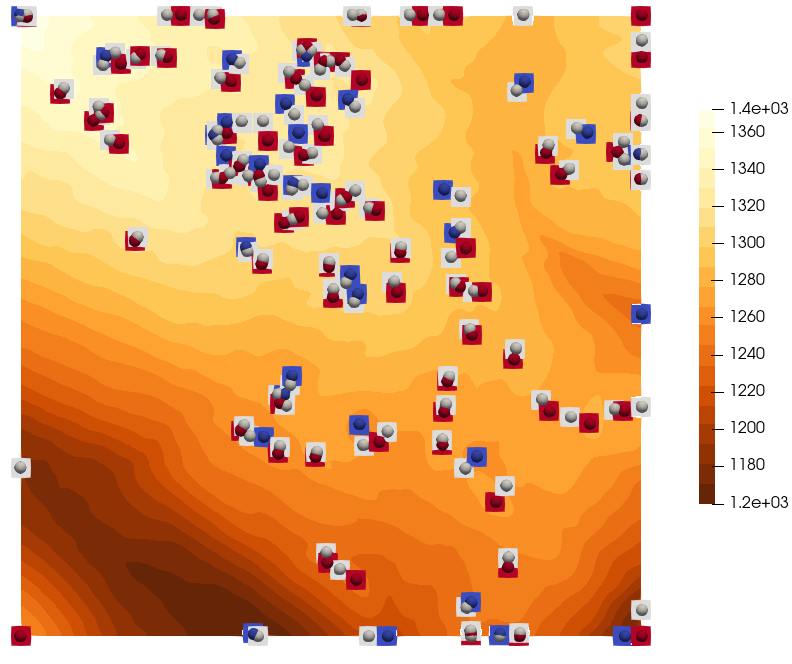} 
\end{subfigure}
\hspace{-13em}
\begin{subfigure}[t]{0.08\linewidth}
    \includegraphics[height=75pt]{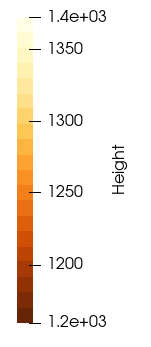} 
\end{subfigure}
\vspace{-2mm}
\caption{Implicit Terrain INR dataset~\cite{leila_inr_topology} reconstructed using our method. Parameters: Poisson disk sampling with $R=0.05$ and $w=0.001$. Critical points from our method are shown as spheres, while those from a uniform mesh ($500 \times 500$ grid with consistent diagonals) are shown as background squares. Both are simplified using a $0.025\%$ persistence threshold.}
\label{fig:implicit_ours}
\vspace{-4mm}
\end{figure}


\section{Discussion and Future Work}
\label{sec:discussion}

Overall, we present a method \change{for constructing}{aiming to construct} a topologically consistent PL mesh from a 2D implicit function by \change{enforcing monotonicity}{detecting non-monotonicity} along mesh edges. This approach is particularly well suited to INRs, enabling reliable topology extraction and supporting the analysis of topology preservation during training.

Our formulation does not address boundary critical points, as these do not correspond to true critical points of the underlying continuous function (i.e., where the gradient vanishes). A further limitation is the restriction to 2D; extending the framework to 3D is an important direction for future work. Parameter selection can also be challenging when the function is not explicitly known. However, the observed stability of topological features across resolutions suggests an effective iterative strategy, where parameters are progressively refined to uncover previously unresolved features.

\change{}{The primary computational cost of our method stems from repeated evaluations of the INR and its derivatives. Unlike the naive approach, which evaluates only the function value at each vertex, our method repeatedly evaluates the function and its derivatives along mesh edges to detect and resolve monotonicity violations. Consequently, although the resulting mesh is typically sparser, it requires substantially more INR evaluations to determine point placements that better satisfy the desired topological guarantees. While batched evaluations and query caching can partially mitigate this overhead, inherently iterative procedures—such as Newton's method and the repeated verification of newly formed edges during refinement—remain computationally expensive.}
Additional overhead arises from constructing the Delaunay triangulation, but this representation enables a wide range of downstream mesh-based analyses. Future work includes improving the efficiency of separatrix refinement, extending the approach beyond scalar fields, and incorporating additional information from the continuous function to recover richer topological structure.

\acknowledgments{This research was supported by the U.S. Department of Energy under grants DE-SC0023319 (University of Arizona), DE-SC0023157 (University of Utah), and DE-SC0022753 (The Ohio State University). 
The authors acknowledge Yansong Yu for their early exploratory work that helped shape this study.
}
\vspace{-0.5em}


\bibliographystyle{plain} 
\bibliography{references} 

\end{document}